\newtheorem{theorem}{Theorem}
\providecommand{\keywords}[1]{
  \small	
  \textbf{\textit{Keywords---}} #1
}
\title{\textbf{Efficient Mediated Multiparty Semi-Quantum Secret Sharing Protocol Based on Single-Qubit Reordering}}
\begin{document}

\author[1]{Mustapha Anis Younes \footnote{Corresponding Author: \texttt{mustaphaanis.younes@univ-bejaia.dz}}}
\author[2]{Sofia Zebboudj \footnote{\texttt{sofia.zebboudj@univ-ubs.fr}}}
\author[3]{Abdelhakim Gharbi \footnote{\texttt{abdelhakim.gharbi@univ-bejaia.dz}}}
\affil[1,2]{Université de Bejaia, Faculté des Sciences Exactes, Laboratoire de Physique Théorique, 06000 Bejaia, Algérie}
\affil[2]{ENSIBS, Université de Bretagne Sud, 56000 Vannes, France}
\date{\vspace{-8ex}}

\maketitle 
\pagenumbering{arabic}
\pagenumbering{arabic}

\begin{abstract}
    Typical multiparty semi-quantum secret sharing (MSQSS) protocols require the dealer to possess full quantum capabilities, while the classical users usually need to perform three operations. To address this practical limitation, this paper introduces a new mediated MSQSS protocol that enables Alice, a classical user, to share a secret with $M$ classical Bobs, with the assistance of an untrusted third party (TP) who may attempt any possible attack to steal Alice's secret. Furthermore, the classical participants require only two capabilities instead of three, namely: (a) performing measurements in the $Z$ basis; and (b) reordering qubits. The proposed scheme offers significant advantages over existing mediated QSS protocols: (1) it is the first mediated SQSS protocol to adopt single qubits, instead of entangled states, as the quantum resource, which makes it more practical and easier to implement; (2) It achieves higher qubit efficiency. Security analysis also demonstrates that the protocol is secure against well-known attacks.
\end{abstract}

\keywords{Semi-quantum cryptography; Multiparty mediated semi-quantum secret sharing; single qubits; Dishonest third party.}

\section{Introduction}

Secret sharing is a procedure that allows a dealer to share a secret among several participants. Invented independently by Shamir \cite{Shamir1979} and Blakley \cite{BLAKLEY1979} in 1979, it involves splitting the secret into multiple parts (called shadows) and distributing them to the participants. This is done in such a way that no individual part reveals any intelligible information about the original secret. Only when a sufficient number of participants combine their shadows can the secret be reconstructed. The security of classical secret sharing (CSS) protocols relies on computational complexity and hard mathematical problems, which makes them vulnerable to quantum computing attacks \cite{Chuang1995,Plenio1996}. On the other hand, quantum secret sharing (QSS) can overcome this challenge by relying on the fundamental laws of quantum physics.

In 1999, Hillery et al. \cite{Hillery1999} introduced the first QSS protocol based on GHZ states. Since then, numerous QSS protocols and experimental implementations have been proposed \cite{Zhang2005,Tyc2002,Bagherinezhad2003,Sarvepalli2012,Karlsson1999,Yu2008,Tavakoli2015}, leveraging the properties of various quantum resources. However, these protocols typically require participants to possess full quantum capabilities, which is difficult to achieve in practice \cite{Cirac2020}, as not everyone can afford expensive quantum devices.

To address this issue, Boyer et al. \cite{Boyer2007} introduced the concept of a "semi-quantum environment", which includes two types of users: quantum and classical. According to the definition, quantum users possess full quantum capabilities, whereas classical users are restricted to performing the following operations: (1) reflect particles without disturbance; (2) measure qubits in the $Z$ basis $\{\ket{0}, \ket{1}\}$; (3) prepare qubits in the $Z$ basis; and (4) reorder qubits. In 2007, Boyer et al. \cite{Boyer2007,Boyer2009} proposed the first semi-quantum key distribution (SQKD) protocol. Since then, various SQKD protocols have been developed \cite{Zou2009,Boyer2017,Tsai2018,Wang2011,Iqbal2020}, allowing a quantum user to share secret keys with a classical user. In 2015, Krawec \cite{Krawec2015} introduced the mediated model, which involves an untrusted, fully quantum third party (TP) acting as a mediator to help two classical participants establish a secure key, further reducing the quantum burden on the users.

In 2011, Li et al. \cite{Li2010} proposed the first semi-quantum secret sharing (SQSS) protocol, in which a quantum dealer can share secret information with two classical participants using GHZ-type states. Following this, various SQSS protocols have been proposed \cite{Xie2015,Li2013,CHUNWEI2013,Tian2021, Tian2023,Hou2024,Xing2023,Gao2016,Xin2024,Ma2024,Younes2024}, many of which can accommodate multiple participants. Although these protocols are lighter than fully quantum ones, they share a major restriction: the dealer must always be the quantum user. It is therefore interesting, from both theoretical and experimental perspectives, to explore whether a classical user can assume the role of the dealer.

This restriction was first tackled by Tsai et al. \cite{Tsai2022} in 2021. Their approach, based on the mediated model and leveraging the properties of GHZ states, introduced the first mediated multiparty quantum secret sharing (MQSS) protocol. It enables a classical user to securely share a secret with other classical users, with the assistance of an adversarial, fully quantum third party (TP). However, this protocol suffers from extremely low qubit efficiency. In 2023, Tsai et al. \cite{Tsai2023} proposed another mediated MQSS protocol based on a measurement property of graph states. This new protocol achieves a qubit efficiency that is $2^{M-1}$ times higher than the first scheme, where $M$ is the number of participants. Although both protocols place all participants on equal footing in terms of capabilities, the TP still requires heavy quantum resources. In practice, the cost and complexity of generating and maintaining such entangled states remain prohibitively high. Ideally, protocols where both the TP and the participants require only minimal quantum capabilities, such as handling single-qubit states, would be far more practical.

To reduce TP's quantum burden, this paper introduces the first mediated multiparty semi-quantum secret sharing protocol (MSQSS) based on single qubits. In the proposed scheme, TP is only required to: (1) generate qubits in the state $\ket{+}$, and (2) measure qubits in the $Z = \{\ket{0}, \ket{1}\}$ and $X = \{\ket{+} = (\ket{0} + \ket{1})/\sqrt{2}, \ket{-} = (\ket{0} - \ket{1})/\sqrt{2}\}$ bases. As for the classical participants, they only need two capabilities: (a) measuring qubits in the $Z$ basis, and (b) reordering qubits. As a result, our protocol is more practical for real-world implementation. Furthermore, the use of the qubit reordering operation minimizes the number of discarded particles in the protocol, resulting in higher qubit efficiency compared to previous mediated QSS protocols. Our protocol also adopts a circular qubit transmission method, making it more scalable than tree-based methods, especially in multiparty scenarios. Finally, security analyses shows that the protocol is secure and can resist well-known attacks, such as the intercept-resend attack, fake-state attack, entanglement-measure attack, Trojan horse attacks, and collusion attacks.

The remainder of this paper is organized as follows. Section \ref{sec_protocol} describes the proposed protocol in detail. Section \ref{sec_example} provides a concrete example of the protocol. The security analysis is presented in Section \ref{sec_security}. Section \ref{sec_comp} discusses the efficiency analysis and provides a comparison with other similar schemes. Finally, a conclusion is given in section \ref{sec_conc}.

\section{The proposed protocol}\label{sec_protocol}

In this paper, we propose a new mediated semi-quantum QSS scheme, where Alice, a classical entity wants to share a secret with $M$ classical Bobs with the help of an untrusted third-party (TP) who might attempt any possible attack to steal Alice's secret without being detected. Namely, Alice is capable to perform the following operations:

\begin{enumerate}
    \item Generate and measure qubits in the $Z= \{\ket{0}, \ket{1}\}$ basis.
    \item Reorder qubits via different delay lines.
\end{enumerate}
On the other hand, the classical Bobs are only capable of performing two operations, namely:
\begin{itemize}
    \item Measuring qubits in the $Z$ basis.
    \item Reordering qubits.
\end{itemize}

As for TP, he is only required to perform the following operations:

\begin{enumerate}
    \item Generate qubits in the state $\ket{+}=\frac{1}{\sqrt{2}}(\ket{0}+\ket{1})$. 
    \item Measure qubits in the $X$ and $Z$ basis, such as 

    \begin{equation}
        X = \{\ket{+}=\frac{1}{\sqrt{2}}(\ket{0}+\ket{1}), \ket{-}=\frac{1}{\sqrt{2}}(\ket{0}-\ket{1})\}
    \end{equation}
\end{enumerate}

The proposed protocol adopts a circular qubit transmission method. Additionally, there exist a public authenticated classical channel between the participants. Since the TP is considered adversarial, the classical channel between him and the participants do not necessarily need to be authenticated.

\subsection{Steps of the protocols}

Let $S$ be a classical bit string that Alice wants to share, with $L$ its length. The procedure of our proposed MSQSS protocol unfolds as follows:

\subsubsection{Step 01: (Preparation by TP)}

TP generates a sequence $S_{TP}$ of $N = 4L(1+M\epsilon)$ qubits, where:

\begin{itemize}
    \item $L$ is the desired length of the final secret key.
    \item $M$ is the number of participants.
    \item $\epsilon$ is a parameter that satisfies $\epsilon < 1$.
\end{itemize}

Each qubit in the sequence is prepared in the state $\ket{+}$, and the sequence is then transmitted to Alice.

\subsubsection{Step 02: (Alice's operations)}

Upon receiving the sequence $S_{TP}$, Alice randomly selects half of the qubits to measure in the $Z$ basis. She then replaces those qubits with newly generated ones in the same state she found. For convenience, we refer to those qubits as \textit{SIFT} particles, and the remaining ones as \textit{CTRL} particles. Alice ends up with a new sequence, denoted as $S_A$. Before transmitting the sequence to $\text{Bob}_1$, Alice reorders randomly the $N$ qubits. Note that the specific rearrangement order of the qubits is only known to Alice. The resulting sequence, denoted as $S_A^\prime$, is then sent to $\text{Bob}_1$.


\subsubsection{Step 03: (Participants' operations)}

After receiving the sequence $S_A^\prime$ from Alice, $\text{Bob}_1$ randomly selects a fraction of size $4N\epsilon$ of the sequence and performs a $Z$ basis measurement on those qubits. He saves the positions of the measured qubits along with the corresponding measurement outcomes in his classical register. Next, $\text{Bob}_1$ randomly reorders the remaining qubits, creating a new sequence $S_{B_1}$, which he sends to $\text{Bob}_2$. Upon receiving the sequence, $\text{Bob}_2$ performs the same operations as $\text{Bob}_1$ and sends his resulting sequence to the next Bob. Each subsequent Bob follows the same procedure, except for the last one who sends his sequence $S_{B_M}$ back to TP. 

Note that when $N$ is large enough, it is sufficient for each Bob to select a subset approximately half the length of the secret to estimate the error rate with Alice. Therefore, setting $\epsilon = 1/8$ or less, for example, is a choice that is both valid and reasonable.

\subsubsection{Step 04: (TP's operations)} 

When TP receives the sequence $S_{B_M}$ from $\text{Bob}_M$, he randomly chooses to measure each qubit in either the $X$ basis or the $Z$ basis. After performing the measurements, he publicly announces both the measurement basis and the corresponding outcome for each qubit.

\subsubsection{Step 05: (Eavesdropping check)}

In this step, Alice conducts an eavesdropping check with the $M$ participants using the authenticated public channel. Alice requests all classical participants to reveal the positions of the qubits they measured, along with their corresponding measurement outcomes as follows:

\begin{itemize}
    \item $\text{Bob}_1$ announces the positions of the qubits he measured as well as the measurement outcome for each qubits. 
    \item For the remaining participants ($\text{Bob}_i$, $i\geq 2$), the information must be revealed in the following manner:
    \begin{enumerate}
    \item $\text{Bob}_i$ first reveals the positions of the measured qubits.
    \item The previous participants, starting from $\text{Bob}_{i-1}$ and moving in reverse order, disclose the transposition order of these announced qubits, so Alice can accurately perform her security check.
    \item After that, $\text{Bob}_i$ reveals the corresponding measurement outcome for each qubit. 
\end{enumerate}
\end{itemize}

For each participant, Alice compares the announced measurement outcomes with the corresponding states in her original sequence.

\begin{itemize}
    \item For the \textit{SIFT} particles, $\text{Bob}_i$'s  ($i\geq 1$) outcomes must match the states that Alice found in step 02.
    \item For \textit{CTRL} particles, Alice verifies if $\text{Bob}_i$'s outcomes are evenly distributed between $\ket{0}$ and $\ket{1}$.
\end{itemize}

If the error rate for the \textit{SIFT} qubits or the deviation rate for the \textit{CTRL} qubits exceeds a preset threshold, Alice aborts the protocol.

\subsubsection{Step 06: (TP's honesty check)}

In this step, Alice verifies the honesty of TP. She begins by requesting $\text{Bob}_M$ to reveal the transposition order of the qubits that TP measured in the $X$ basis. Each remaining participant, starting from the last and proceeding in reverse order, then publicly announces the transposition order of these qubits. Once this is over, Alice publicly announces the position of the \textit{CTRL} particles in her sequence $S_A^\prime$. Following this, $\text{Bob}_1$ reveals the transposition order of these qubits, and each remaining participant, in turn, publicly announces the transposition order for these qubits.

It is important to emphasize that Alice does not reveal her measurement outcomes for the \textit{SIFT} qubits, nor does she disclose the correct reordering for these qubits.

Depending on the operation performed by TP and the specific qubit involved, four equally likely cases arise:

\begin{enumerate}
    \item \textbf{Case 01: ($X$-\textit{CTRL})} TP performed an $X$ basis measurement on a \textit{CTRL} qubit. This case is used for eavesdropping detection. For TP to pass the check, he must consistently announce the measurement result $\ket{+}$; otherwise, Alice and the participants will abort the protocol.

    \item \textbf{Case 02: ($X$-\textit{SIFT})} TP performed an $X$ basis measurement on a \textit{SIFT} qubit. In this case, Alice and the participants expect TP to announce both the measurement results $\ket{+}$ and $\ket{-}$ with equal probability. Alice checks whether the results are evenly distributed, and if the deviation rate exceeds a predetermined threshold, she aborts the protocol.

    \item \textbf{Case 03: ($Z$-\textit{CTRL})} TP performed a $Z$ basis measurement on a \textit{CTRL} qubit. In this case, Alice and the participants expect TP to announce both the measurement results $\ket{0}$ and $\ket{1}$ with equal probability. Alice verifies whether the results are evenly distributed, and if the deviation exceeds a predetermined threshold, she aborts the protocol.

    \item \textbf{Case 04: ($Z$-\textit{SIFT})} TP performed a $Z$ basis measurement on a \textit{SIFT} qubit. The qubits in this case are used to establish the secret sharing key. The measurement outcomes announced by TP correspond to Alice's original measurement outcomes, that we denoted as $K_i^A$, but are randomly shuffled due to the reordering performed by Alice and the participants. Alice's secret key can only be reconstructed if Alice and all classical participants cooperate by sharing their transposition order of these qubits.

    To ensure security, Alice randomly selects a few of these bits and reveals their positions. $\text{Bob}_1$ then discloses the transposition order of these bits, followed by each remaining participant. Alice compares TP's outcomes with her own and if they do not align, the protocol is aborted and restarted. It is important to note that Alice only reveals the positions of those bits and not the outcomes.
\end{enumerate}

\subsubsection{Step 08: (Secret sharing)}

Once Alice confirms the absence of any eavesdropping or dishonest behavior from TP, she discloses her rearrangement order for the \textit{SIFT} qubits belonging to \textit{Case 04}. However, she does not reveal the corresponding measurement outcomes.

Alice now possess a random bit string $K^A$ of length $L$, which serves her as a secret key. She then uses $K_i$ to encrypt her secret bit string $S$ as follows:

\begin{equation}
    C_i = S_i \oplus K_i^A, \quad \text{for each $i$} \in \{1, 2, \cdots, N\}
\end{equation}

Alice then announces $C$ through the authenticated classical channel to share her secret information. When all classical Bobs cooperate by sharing their rearrangement orders, they can reconstruct Alice's secret key $K^A$ from TP's measurement outcomes and decrypt $C$ to retrieve her final secret.

\subsection{An example}\label{sec_example}

Now, we present an example of the proposed multiparty MSQSS protocol, where Alice intends to share a secret of length $N=5$ with two participants, Bob and Charlie. We take $\epsilon=\frac{1}{6}$. In this example, we suppose that TP and the participants are honest.

\subsubsection{TP's preparation}

Suppose TP prepares a sequence of $26$ qubits, each in the state $\ket{+}$ and sends it Alice.

\subsubsection{Alice's operations}

Upon receiving the sequence, Alice applies the a $Z$ basis measurement to the qubits in positions $(1, 2, 3, 6, 7, 10, 13, 15, 16, 18, 19, 23, 24, 25, 26)$. After her operations, she can end up with the following sequence:

\begin{align}
    S_{A} &= \{\ket{0}_1, \ket{1}_2, \ket{1}_3, \ket{+}_4, \ket{+}_5, \ket{0}_6, \ket{1}_7, \ket{+}_8, \ket{+}_9, \ket{0}_{10}, \ket{+}_{11}, \ket{+}_{12}, \nonumber\\ 
    & \hspace{1cm} \ket{0}_{13}, \ket{+}_{14}, \ket{1}_{15}, \ket{0}_{16}, \ket{+}_{17}, \ket{0}_{18}, \ket{0}_{19}, \ket{+}_{20}, \ket{+}_{21}, \ket{+}_{22}, \ket{1}_{23}, \ket{1}_{24}, \ket{0}_{25}, \ket{1}_{26}\}.
\end{align}
She shuffles this sequence by the following table:

\begin{align}\label{tab_alice}
\left(
\begin{array}{*{26}{c}}
    1 & 2 & 3 & 4 & 5 & 6 & 7 & 8 & 9 & 10 \\ 11 & 12 & 13 & 14 & 15 & 16 & 17 & 18 & 19 & 20 \\
    21 & 22 & 23 & 24 & 25 & 26
\end{array}
\right)
\longrightarrow
\left(
\begin{array}{*{26}{c}}
    22 & 19 & 15 & 5 & 12 & 14 & 3 & 21 & 4 & 20 \\
    16 & 6 & 11 & 1 & 8 & 7 & 9 & 13 & 23 & 17 \\
    18 & 24 & 10 & 2 & 26 & 25
\end{array}
\right)
\end{align}

This should read as: first qubit was displaced to position 22, second qubit to position 19, third qubit to position 15 $\cdots$ etc. Therefore, the sequence of qubits turns into:

\begin{align}
    S_{A}^\prime &= \{\ket{+}_{14}, \ket{1}_{24}, \ket{1}_{7}, \ket{+}_{9}, \ket{+}_{4}, \ket{+}_{12}, \ket{0}_{16}, \ket{1}_{15}, \ket{+}_{17}, \ket{1}_{23}, \ket{0}_{13}, \ket{+}_{5}, \nonumber\\ 
    & \hspace{1cm} \ket{0}_{18}, \ket{0}_{6}, \ket{1}_{3}, \ket{+}_{11}, \ket{+}_{20}, \ket{+}_{21}, \ket{1}_{2}, \ket{0}_{10}, \ket{+}_{8}, \ket{0}_{1}, \ket{0}_{19}, \ket{+}_{22}, \ket{1}_{26}, \ket{0}_{25}\}.
\end{align}
The subscript in each vector refers to the initial position of the qubit in Alice's original sequence $S_A$. In the rest of the example, we keep those subscripts to keep track of how Alice's original qubits are being reordered.

Alice sends the sequence $S_A^\prime$ to Bob.

\subsubsection{Bob's operations}

When Bob receives $S_A^\prime$, he randomly selects the qubits at the position 3, 25, and 26 to measure in the $Z$ basis. Therefore, he obtains:

\begin{align}
    \ket{1}_7 &\stackrel{Measure}{\longrightarrow} \ket{1}_7, \\
    \ket{1}_{26} &\stackrel{Measure}{\longrightarrow} \ket{1}_{26},\\
    \ket{0}_{25} &\stackrel{Measure}{\longrightarrow} \ket{0}_{25}.
\end{align}
After discarding those qubits from $S_A^\prime$, Bob ends up with the following sequence:

\begin{align}
    S_{B} &= \{\ket{+}_{14}, \ket{1}_{24}, \ket{+}_{9}, \ket{+}_{4}, \ket{+}_{12}, \ket{0}_{16}, \ket{1}_{15}, \ket{+}_{17}, \ket{1}_{23}, \ket{0}_{13}, \ket{+}_{5}, \nonumber\\ 
    & \hspace{1cm} \ket{0}_{18}, \ket{0}_{6}, \ket{1}_{3}, \ket{+}_{11}, \ket{+}_{20}, \ket{+}_{21}, \ket{1}_{2}, \ket{0}_{10}, \ket{+}_{8}, \ket{0}_{1}, \ket{0}_{19}, \ket{+}_{22}\}.
\end{align}
Bob shuffles this sequence by the order:

\begin{align}\label{tab_bob}
\left(
\begin{array}{*{23}{c}}
    1 & 2 & 3 & 4 & 5 & 6 & 7 & 8 & 9 & 10 \\  11 & 12 & 13 & 14 & 15 & 16 & 17 & 18 & 19 & 20 \\
    21 & 22 & 23
\end{array}
\right)
\longrightarrow
\left(
\begin{array}{*{23}{c}}
    13 & 17 & 15 & 8 & 4 & 5 & 21 & 14 & 12 & 19  \\
    3 & 2 & 18 & 23 & 10 & 6 & 16 & 9 & 11 & 1 \\
    22 & 7 & 20
\end{array}
\right)
\end{align}
This should read as: first qubit in $S_B$ was displaced to position 13, second qubit to position 17 $\cdots$ etc. Therefore, the sequence of qubits turns into:

\begin{align}
    S_{B}^\prime &= \{\ket{+}_8, \ket{0}_{18}, \ket{+}_5, \ket{+}_{12}, \ket{0}_{16}, \ket{+}_{20}, \ket{0}_{19}, \ket{+}_4, \ket{1}_2, \ket{+}_{11}, \ket{0}_{10}, \ket{1}_{23}, \nonumber\\ 
    & \hspace{1cm} \ket{+}_{14}, \ket{+}_{17}, \ket{+}_{9}, \ket{+}_{21}, \ket{1}_{24}, \ket{0}_{6}, \ket{0}_{13}, \ket{+}_{22}, \ket{1}_{15}, \ket{0}_{1}, \ket{1}_{3}\}.
\end{align}
Bob proceeds to send this sequence to Charlie.

\subsubsection{Charlie's operations}

Charlie randomly selects the qubits at positions 2, 6, and 13 of $S_B^\prime$ to measure in the $Z$ basis, which means that Charlie measures the qubits $\ket{0}_{18}$, $\ket{+}_{20}$, and $\ket{+}_{14}$, respectively. Charlie can obtain the following outcomes:

\begin{align}
    \ket{0}_{18} &\stackrel{Measure}{\longrightarrow} \ket{0}_{18}, \\
    \ket{+}_{20} &\stackrel{Measure}{\longrightarrow} \ket{1}_{20},\\
    \ket{+}_{14} &\stackrel{Measure}{\longrightarrow} \ket{0}_{14}.
\end{align}
After discarding those qubits from $S_B^\prime$, she ends up with the following sequence:

\begin{align}
    S_{C} &= \{\ket{+}_8, \ket{+}_5, \ket{+}_{12}, \ket{0}_{16}, \ket{0}_{19}, \ket{+}_4, \ket{1}_2, \ket{+}_{11}, \ket{0}_{10}, \ket{1}_{23}, \nonumber\\ 
    & \hspace{1cm}  \ket{+}_{17}, \ket{+}_{9}, \ket{+}_{21}, \ket{1}_{24}, \ket{0}_{6}, \ket{0}_{13}, \ket{+}_{22}, \ket{1}_{15}, \ket{0}_{1}, \ket{1}_{3}\},
\end{align}
which she shuffles by the following order:

\begin{align}\label{tab_char}
\left(
\begin{array}{*{20}{c}}
    1 & 2 & 3 & 4 & 5 & 6 & 7 & 8 & 9 & 10 \\  11 & 12 & 13 & 14 & 15 & 16 & 17 & 18 & 19 & 20
\end{array}
\right)
\longrightarrow
\left(
\begin{array}{*{20}{c}}
    7 & 5 & 1 & 20 & 8 & 18 & 12 & 6 & 15 & 9 \\  13 & 10 & 11 & 17 & 16 & 4 & 3 & 19 & 2 & 14
\end{array}
\right)
\end{align}

Therefore, Charlie ends up with the sequence $S_C^\prime$ as follow:

\begin{align}
    S_{C}^\prime &= \{\ket{+}_{12}, \ket{0}_{1}, \ket{+}_{22}, \ket{0}_{13}, \ket{+}_{5}, \ket{+}_{11}, \ket{+}_{8}, \ket{0}_{19}, \ket{1}_{23}, \ket{+}_{9}, \nonumber\\ 
    & \hspace{1cm} \ket{+}_{21}, \ket{1}_{2}, \ket{+}_{17}, \ket{1}_{3}, \ket{0}_{10}, \ket{0}_{6}, \ket{1}_{24}, \ket{+}_{4}, \ket{1}_{15}, \ket{0}_{16}\}.
\end{align}
Charlie sends this sequence to TP.

\subsubsection{TP's operations}

Upon receiving $S_C^\prime$, TP performs the $X$ basis measurement in the positions $(1, 2, 3, 5, 8, 10, 14, 17, 18)$ and the $Z$ basis measurement in the remaining positions. Then TP announces his measurement basis for each qubit as well as his measurement outcomes. If TP is honest, he announces the following sequence: 

\begin{align}
    S^\prime_{TP} &= \{\ket{+}_{12}, \ket{-}_{1}, \ket{+}_{22}, \ket{0}_{13}, \ket{+}_{5}, \ket{1}_{11}, \ket{0}_{8}, \ket{+}_{19}, \ket{1}_{23}, \ket{+}_{9}, \nonumber\\ 
    & \hspace{1cm} \ket{1}_{21}, \ket{1}_{2}, \ket{0}_{17}, \ket{+}_{3}, \ket{0}_{10}, \ket{0}_{6}, \ket{-}_{24}, \ket{+}_{4}, \ket{1}_{15}, \ket{0}_{16}\}.
\end{align}

\subsubsection{Eavesdropping check}

Following Alice’s request, Bob and Charlie announces the positions and outcomes of their selected qubits as indicated in the steps of the protocol. Alice verifies if their outcomes on the \textit{SIFT} qubits are aligned with her own, which is the case in the following situation. and that Charlie’s outcomes on the \textit{CTRL} qubits are evenly distributed, which is also satisfied.

It is important to note that Bob must take into account his discarded qubits on $S_B$ to determine the correct corresponding positions of Charlie's selected qubits.

Based on this verification, Bob and Charlie pass the public discussion.

\subsubsection{TP's honesty check}

By using their respective transposition tables (Tables \ref{tab_bob} and \ref{tab_char}), Bob and Charlie publicly announce, in reverse order, the correct transposition of the qubits that TP measured in the $X$ basis. Note that they must account for the discarded qubits in order to determine the correct reordering. Then, after Alice announces the positions of the \textit{CTRL} particles, Bob and Charlie publish their transposition orders for the qubits at those corresponding positions. Based on the operations of Alice and TP, we obtain the following four subsequences of $S_{TP}^\prime$:

\begin{enumerate}
    \item Subset where TP measured the \textit{CTRL} qubits in the $X$ basis:
\begin{align}
    S_{CX} = \{\ket{+}_{12},\ket{+}_{22},\ket{+}_{5},\ket{+}_{9},\ket{+}_{4}\}.
\end{align}
As we can see, all TP's measurement results are the state $\ket{+}$ as they should be.

    \item Subset where TP measured the \textit{SIFT} qubits in the $X$ basis:
    \begin{align}
    S_{SX} = \{\ket{-}_{1},\ket{+}_{19},\ket{+}_{3},\ket{-}_{24}\}.
\end{align}
As we can see, we have an even distribution for the states $\ket{+}$ and $\ket{-}$. 

    \item Subset where TP measured the \textit{CTRL} qubits in the $Z$ basis:
    \begin{align}
    S_{CZ} = \{\ket{1}_{11},\ket{0}_{8},\ket{1}_{21},\ket{0}_{17}\}.
\end{align}
As we can see, we have an even distribution for the states $\ket{0}$ and $\ket{1}$.

    \item Subset where TP measured the \textit{SIFT} qubits in the $Z$ basis:
    \begin{align}
    S_{SZ} =\{\ket{0}_{13},\ket{1}_{23},\ket{1}_{2},\ket{0}_{10},\ket{0}_{6},\ket{1}_{15},\ket{0}_{16}\}.
\end{align}
Alice uses the measurement outcomes of the $2^{nd}$ and $6^{th}$ qubits of her sequence $S_A$ as test bits, which correspond to the $3^{rd}$ and $5^{th}$ qubits of $S_{SZ}$. As we can see, TP announced the correct measurement results. 
\end{enumerate}

After the eavesdropping detection passes, TP, Bob, and Charlie have access to the following result $('01010')$. By sharing their own rearrangement order, Bob and Charlie can obtain Alice's bit string $K= '00101'$.

\section{Security analysis}\label{sec_security}

In this section, we examine the security of the proposed protocol. Given that TP possesses greater capabilities than any external or internal eavesdropper, we focus on the scenario where TP acts as the primary adversary. To steal Alice's secret, TP might employ various attacks. Our analysis demonstrates that the protocol remains secure against well-known strategies, including the fake states attack, intercept-resend attack, entanglement-measure attack, and Trojan horse attacks. Additionally, we address the potential threat of collusion, where TP collaborates with one or more participants to compromise Alice's secret.

\subsection{Fake states attack}

In this attack, TP prepares qubits in states other than $\ket{+}$, as required of him in step 1 of the protocol. Suppose TP prepares his sequence in the $Z$ basis $\{\ket{0}, \ket{1}\}$. Although this strategy allows him to control the content of Alice's secret key, he cannot distinguish the \textit{SIFT} particles from the \textit{CTRL} particles. As a result, although TP’s attack goes undetected during the initial eavesdropping check between Alice and the participants, it will be detected with probability $1 - (7/8)^L$ during the honesty check. This probability converges to 1 as $L$ becomes sufficiently large.

\subsection{Intercept-resend attack}

In this attack, TP attempts to learn the rearrangement order of each participant. To do so, he intercepts the sequence $S_A^\prime$ sent by Alice to $\text{Bob}_1$ and stores it in his quantum memory. He then sends a fake sequence of particles to each $\text{Bob}_i$ in turn. After a participant completes his operations, TP intercepts the fake sequence and measures it in an attempt to deduce the participant’s rearrangement order. Once TP has retrieved and measured all the fake sequences, he applies the inferred rearrangement orders to the stored sequence $S_A^\prime$ and sends it back to Alice. However, this attack is bound to fail. To demonstrate this, we examine the following two strategies:

\begin{itemize}
    \item \textit{The fake sequences are composed of qudits:} without loss of generality, suppose that TP prepares the qubits in his fake sequences in the $Z$ basis. In this scenario, TP cannot distinguish whether two qubits in the same state but at different positions have been exchanged. For example, consider the sequence $\{\ket{1}, \ket{1}, \ket{0}, \ket{0}\}$. If it is reordered to $\{\ket{1}, \ket{0}, \ket{1}, \ket{0}\}$, TP cannot tell whether the first $\ket{1}$ remained in place or was moved to the third position. When the sequence is large, the probability of correctly guessing the transposition order becomes negligible. Furthermore, this attack would inevitably be detected during the eavesdropping check between Alice and the participants, since TP’s fake sequences do not match Alice’s original sequence $S_A^\prime$. As a result, the participants’ measurement outcomes in step 3 of the protocol would not necessarily align with the outcomes Alice recorded in step 2. Therefore, TP cannot gain any useful information using this strategy.

    \item \textit{The fake sequences are composed of qubits:} in this strategy, TP uses $n$-level quantum states to prepare his fake sequences. To deduce $\text{Bob}_1$'s rearrangement order, for instance, TP prepares the sequence $\{\ket{0}_N, \ket{1}_N, \ket{2}_N, \cdots, \ket{N-1}_N\}$, where the subscript $N$ is to denote that they are $N$ dimensional vectors. After $\text{Bob}_1$ completes his operations, TP retrieves the sequence and measures the particles in the $Z^{(N)}=\{\ket{0}_N, \ket{1}_N, \cdots, \ket{N-1}_N\}$ basis. Since all the states in the sequence are mutually orthogonal, TP can determine $\text{Bob}_1$'s rearrangement order with certainty. If $\text{Bob}_1$'s actions consisted solely of reordering the particles, TP's strategy would succeed. However, $\text{Bob}_1$ also randomly selects a subset of particles to measure in the two-level $Z$ basis $\{\ket{0}, \ket{1}\}$. In that case, TP’s attack is bound to be detected since $\text{Bob}_1$'s outcomes won't be aligned with Alice's results in step 02 of the protocol. The same reasoning applies to the other Bobs, meaning that TP cannot obtain any useful information using this strategy without being detected.

\end{itemize}

\subsection{Entanglement-measure attack}


The entanglement-measure attack of TP is modeled by the unitary operations $(U_F, U_R)$, where $U_F$ is used to attack the particles sent by Alice to $\text{Bob}_1$, and $U_R$ is used to attack the particles sent by $\text{Bob}_M$ to TP. Note that TP does not attack the particles he sends to Alice, as they do not carry any information about Alice’s secret or the participants’ secret shadows. Furthermore, TP uses a different auxiliary probe, denoted as $F$ and $R$ respectively, for each unitary operation. The reason is that, since the sequence sent by Alice is reordered by the participants, TP cannot determine which ancilla $F$ corresponds to each retrieved particle.

\begin{theorem}
    Suppose TP performs an attack $(U_F, U_R)$ on the particles traveling from Alice to $\text{Bob}_1$ and from $\text{Bob}_M$ back to him, where $F$ and $R$ are TP’s auxiliary probes. TP introduces no error during the eavesdropping check if and only if the final states of his probes are independent of his measurement results on the qubits received by $\text{Bob}_M$. As a result, TP gains no information about Alice’s shared secret.
\end{theorem}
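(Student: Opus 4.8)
The plan is to model TP's most general attack as two unitaries $U_F$ and $U_R$ acting jointly on a traveling qubit and TP's ancilla, then translate the ``no error'' condition in the eavesdropping check into algebraic constraints on the action of these unitaries, and finally show that those constraints force the ancilla's final state to factor out, leaving no correlation with Alice's secret. First I would write down the most general form of $U_F$ on the basis states that TP can send. Since TP prepares $\ket{+}$ and the classical participants only measure in the $Z$ basis, it is cleanest to express $U_F$ acting on $\{\ket{0},\ket{1}\}$ with an initial probe $\ket{F}$:
\begin{align}
    U_F\ket{0}\ket{F} &= \alpha_0\ket{0}\ket{F_{00}} + \beta_0\ket{1}\ket{F_{01}}, \nonumber\\
    U_F\ket{1}\ket{F} &= \alpha_1\ket{0}\ket{F_{10}} + \beta_1\ket{1}\ket{F_{11}},
\end{align}
where the $\ket{F_{ij}}$ are (unnormalized) probe states and $|\alpha_i|^2+|\beta_i|^2=1$. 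I would then apply $U_F$ to $\ket{+}\ket{F}$ by linearity, and record the corresponding expressions for $U_R$ with its own probe $\ket{R}$.

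Next I would impose the two detection conditions that appear in the eavesdropping and honesty checks. The \emph{CTRL} qubits (those Alice left as $\ket{+}$) must, when measured by TP in the $X$ basis, always return $\ket{+}$ (Case 01); this is the sharp ``no $\ket{-}$ outcome'' condition. Expanding $U_F\ket{+}\ket{F}$ in the $X$ basis and demanding that the coefficient of $\ket{-}$ vanish yields relations among the probe states, typically forcing $\ket{F_{00}}=\ket{F_{11}}$ and $\ket{F_{01}}=\ket{F_{10}}=0$ (up to normalization), i.e. $U_F$ acts as $\ket{\psi}\ket{F}\mapsto\ket{\psi}\ket{F'}$ on the $X$-eigenstate, so no which-path entanglement is created. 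Simultaneously, the \emph{SIFT} qubits must reproduce Alice's recorded $Z$-outcomes exactly (Step 05), which forces the off-diagonal bit-flip amplitudes $\beta_0$ and $\alpha_1$ to vanish. Combining the $X$-basis and $Z$-basis constraints pins $U_F$ down to the identity on the qubit tensored with a fixed probe evolution. I would run the identical argument for $U_R$ using the honesty-check cases (the $Z$-\emph{CTRL} and $X$-\emph{CTRL} subsets of Step 06), concluding that $U_R$ likewise cannot entangle the returning qubit with $R$ without producing a detectable deviation.

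For the converse and the final information-theoretic conclusion, I would argue that once both unitaries reduce to (qubit)\,$\otimes$\,(probe) product form, TP's final joint state factorizes as $\ket{\text{qubits}}\otimes\ket{F'}\otimes\ket{R'}$, so his measurement on the probes is statistically independent of the qubit values that encode $K^A$. Because the secret key is recoverable only from the \emph{SIFT}-in-$Z$ outcomes together with the participants' reordering permutations, and TP learns neither the permutations (they are revealed only after the checks pass) nor any qubit-correlated ancilla data, his probe carries zero information about $K^A$; this gives the ``if'' direction, while the detection analysis above gives the ``only if.'' I expect the main obstacle to be handling the reordering rigorously: because the participants permute the sequence between $U_F$ and $U_R$, TP cannot match a given $F$-ancilla to its corresponding $R$-ancilla, so I must argue that \emph{no} correlation survives averaging over the unknown permutation, rather than merely for a fixed alignment. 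Making this averaging step precise --- showing that the permutation randomness genuinely decouples the two probes from the key bits even under the most adaptive measurement --- is the delicate part, and I would treat it by noting that independence of each probe's final state from the qubit value (established above) already suffices, so the permutation only strengthens the conclusion.
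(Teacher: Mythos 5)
There is a genuine gap in your argument, and it sits exactly at the step the theorem is really about. You impose the Case-01 condition (``$X$-\emph{CTRL} outcomes must always be $\ket{+}$'') directly on $U_F\ket{+}\ket{F}$, i.e.\ on the state right after the first attack. But the protocol contains no $X$-basis measurement at that point: the Bobs only ever measure in the $Z$ basis, and the $X$-basis measurement is performed by TP himself at the very end, \emph{after} $U_R$ has been applied. So the only constraint the protocol legitimately places on $U_F$ alone (via the Step-05 check) is the absence of bit flips, which still leaves
\begin{equation*}
U_F\ket{0}\ket{f}=\ket{0}\ket{f_0},\qquad U_F\ket{1}\ket{f}=\ket{1}\ket{f_3},\qquad\text{with }\ket{f_0},\ket{f_3}\text{ arbitrary and possibly orthogonal.}
\end{equation*}
This CNOT-like attack passes the eavesdropping check with zero error and copies precisely the $Z$-basis values of the \emph{SIFT} qubits --- i.e.\ Alice's key --- into the probe $F$. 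Your proof never rules this attack out; it is excluded by fiat when you attribute the no-$\ket{-}$ condition to $U_F$. (A smaller issue: even as a standalone condition, vanishing of the $\ket{-}$ amplitude does not force $\ket{F_{00}}=\ket{F_{11}}$ and $\ket{F_{01}}=\ket{F_{10}}=0$; it only yields the single vector equation $\alpha_0\ket{F_{00}}+\alpha_1\ket{F_{10}}=\beta_0\ket{F_{01}}+\beta_1\ket{F_{11}}$.) Your treatment of $U_R$ inherits the same flaw, since it presupposes that the \emph{CTRL} qubits arrive at $U_R$ as clean $\ket{+}$ states, which is exactly what has not been established.

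The repair is the paper's route: the $X$-\emph{CTRL} condition must be imposed on the \emph{composition} $(U_R\otimes I_F)U_F$. After Step 05 fixes $U_F$ to be bit-flip free (probe states still free), and the Case-04 test bits fix the composite's $Z$-basis behavior ($\delta=0$, $\gamma=1$), a \emph{CTRL} qubit ends in the state
\begin{equation*}
\tfrac{1}{2}\ket{+}\bigl(\ket{r_0}\ket{f_0}+\ket{r_3}\ket{f_3}\bigr)+\tfrac{1}{2}\ket{-}\bigl(\ket{r_0}\ket{f_0}-\ket{r_3}\ket{f_3}\bigr),
\end{equation*}
and only now does demanding no $\ket{-}$ outcome give $\ket{r_0}\ket{f_0}=\ket{r_3}\ket{f_3}$, which by the product structure forces $\ket{f_0}=\ket{f_3}$ \emph{and} $\ket{r_0}=\ket{r_3}$. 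That joint, composition-level step is what kills the entangling freedom left in $U_F$ by the $Z$-only checks, and it is the content your proof is missing. Your closing discussion of the permutations and of the final factorized form is fine, but it rests on the unsupported claim that each unitary separately is non-entangling.
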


\begin{proof}

TP intercepts $S_A^\prime$ from Alice and applies $U_F$ to each traveling qubit along with its associated ancilla $F$, initially prepared in some arbitrary normalized state $\ket{f}$. The composite system then evolves as follows:

\begin{align}
    U_F (\ket{0}\ket{f}) &= \alpha \ket{0}\ket{f_0} + \beta \ket{1}\ket{f_1}, \\
    U_F (\ket{1}\ket{f}) &= \beta \ket{0}\ket{f_2} + \alpha \ket{1}\ket{f_3},
\end{align}
such as $\abs{\alpha}^2 + \abs{\beta}^2 = 1$, and $\ket{f_i}$ are states that TP can distinguish. By linearity, we obtain

\begin{align}
    U_F (\ket{+}\ket{f}) &= \frac{1}{\sqrt{2}} \ket{0} \big(\alpha \ket{f_0} + \beta \ket{f_2} \big) \\
                         &\quad\quad + \frac{1}{\sqrt{2}} \ket{1} \big(\beta \ket{f_1} + \alpha \ket{f_3} \big)
\end{align}

In order for TP to pass the eavesdropping check between Alice and the participants in \textbf{Step 05} of the protocol, he must adjust $U_F$ accordingly. Specifically, TP must set $\beta = 0$ to ensure that the participants do not obtain invalid outcomes, and $\alpha = 1$ to satisfy the normalization condition. Under these constraints, the above equations reduce to:

\begin{align}
    U_F (\ket{0}\ket{f}) &= \ket{0}\ket{f_0}, \\
    U_F (\ket{1}\ket{f}) &= \ket{1}\ket{f_3}, \\
    U_F (\ket{+}\ket{f}) &= \frac{1}{\sqrt{2}} \big(\ket{0}\ket{f_0} + \ket{1}\ket{f_3} \big)
\end{align}

When $\text{Bob}_M$ sends his sequence, TP attaches to each qubit a new probe, initially prepared in some arbitrary normalized state $\ket{r}$. He then applies the operation $(U_R \otimes I_F)$ to the composite system, which evolves as follows:

\begin{align}
    (U_R \otimes I_F)(U_F \ket{0}\ket{f}) &= \gamma \ket{0}\ket{r_0}\ket{f_0} + \delta \ket{1}\ket{r_1}\ket{f_0}, \\
    (U_R \otimes I_F)(U_F \ket{1}\ket{f}) &= \delta \ket{0}\ket{r_2}\ket{f_3} + \gamma \ket{1}\ket{r_3}\ket{f_3}, \\
\end{align}
    where $\abs{\gamma}^2 + \abs{\delta}^2 = 1$, and $\ket{r_i}$ are states that TP can distinguish. By linearity, we obtain

\begin{align}
    (U_R \otimes I_F)(U_F \ket{+}\ket{f}) &= \frac{1}{\sqrt{2}} \ket{0} \big( \gamma \ket{r_0}\ket{f_0} + \delta\ket{r_2}\ket{f_3} \big) \\ 
                                        &\quad\quad + \frac{1}{\sqrt{2}} \ket{1} \big( \delta \ket{r_1}\ket{f_0} + \gamma \ket{r_3}\ket{f_3} \big)
\end{align}

If TP wants to pass the honesty check, then the final state of the first register must be identical to the original state sent by Alice. This implies that TP must satisfy the conditions $\delta = 0$ and $\gamma = 1$. Therefore, $U_R$ is now defined as follows:

\begin{align} \label{U_R_z_2/3}
    (U_R \otimes I_F)(U_F \ket{0}\ket{f}) &= \ket{0}\ket{r_0}\ket{f_0}, \\
    (U_R \otimes I_F)(U_F \ket{1}\ket{f}) &= \ket{1}\ket{r_3}\ket{f_3}, \\
\end{align}
By linearity

\begin{align} \label{U_R_+_2/3}
    (U_R \otimes I_F)(U_F \ket{+}\ket{f}) &= \frac{1}{2} \ket{+} \big( \ket{r_0}\ket{f_0} + \ket{r_3}\ket{f_3} \big) \\
                                        & \quad\quad + \frac{1}{2} \ket{-} \big( \ket{r_0}\ket{f_0} - \ket{r_3}\ket{f_3} \big)
\end{align}
To go undetected, TP must set the incorrect terms as a zero vector, specifically:

\begin{align}
    \ket{r_0}\ket{f_0} = \ket{r_3}\ket{f_3},
\end{align}
which means that

\begin{align} \label{rf_0 = rf_3}
    \ket{f_0} &= \ket{f_3} = \ket{F}, \\
    \ket{r_0} &= \ket{r_3} = \ket{R}.
\end{align}
After inserting Eq. (\ref{rf_0 = rf_3}) into Eqs. (\ref{U_R_z_2/3}-\ref{U_R_+_2/3}), we obtain

\begin{align} \label{U_R_final}
    \begin{cases}
        (U_R \otimes I_F)(U_F \ket{0}\ket{f}) = \ket{0}\ket{R}\ket{F}, \\
        (U_R \otimes I_F)(U_F \ket{1}\ket{f}) = \ket{1}\ket{R}\ket{F}, \\
        (U_R \otimes I_F)(U_F \ket{+}\ket{f}) = \ket{+}\ket{R}\ket{F}.
    \end{cases}
\end{align}

According to Eq. (\ref{U_R_final}), when TP remains undetected during both the eavesdropping check with the participants and the honesty check, not only can he not distinguish the final states of his ancillary probes, but these states are also always independent of the \textit{CTRL} and \textit{SIFT} particles. As a result, TP obtains no information about Alice’s shared secret key.

\end{proof}

\subsection{Trojan horse attack}

Quantum Trojan horse attacks are common implementation attacks, in which TP inserts invisible or delayed photons into the particles he transmits to Alice. After retrieving these Trojan horse photons, TP can measure them to extract information about Alice's operations. He can use the same strategy with the other participants to obtain their secret shadows. Fortunately, the participants can easily defend against this attack by using a photon number splitter (PNS) and a wavelength filter device (WF). Therefore, the proposed protocol is secure against quantum Trojan horse attacks.

\subsection{Collusion attack}

In this attack, TP collaborates with some dishonest participants to obtain Alice's secret. Without loss of generality, consider the extreme scenario where only $\text{Bob}_i$ (along with Alice of course) is honest. A first strategy is to guess $\text{Bob}_i$'s reordering. However, since the permutation is chosen randomly and independently of the other Bobs, the probability of guessing it correctly is $1/n!$, which becomes negligible as $n$ grows large.

A second strategy is for TP to still apply the attacks previously described, while the $M-1$ dishonest Bobs announce fake transposition orders in an attempt to avoid detection during the eavesdropping and honesty checks. However, this strategy will inevitably fail. First, TP and the dishonest Bobs do not know which qubits $\text{Bob}_i$ will choose to measure in the $Z$ basis, nor their states. Furthermore, they cannot distinguish the \textit{CTRL} qubits from the \textit{SIFT} qubits in Alice's sequence $S_A^\prime$, as Alice only discloses this information after all Bobs have revealed the transposition orders of the \textit{$X$-CTRL} qubits. As a result, the fake transposition orders will introduce errors during both the eavesdropping and honesty checks, causing the protocol to be aborted.

Overall, our protocol is secure against collusion attacks.

\section{Efficiency analysis and comparison}\label{sec_comp}

In this section, we analyze the efficiency of the proposed protocol and compare its performance with existing multiparty mediated QSS protocols. The efficiency of our scheme, as well as those presented in Refs. \cite{Tsai2022,Tsai2023}, can be calculated using the following formula:

\begin{align}
    \eta = \frac{c}{q + b},
\end{align}
where $c$ is the length of the final secret key, $q$ is the number of qubits generated by TP, and $b$ is the number of qubits generated by the classical participants. In the proposed scheme, TP generates $q = 4L(1+M\epsilon)$ qubits. Alice measures approximately half of these qubits in the $Z$ basis and replaces them with newly generated ones in the same states she observed. Since she is the only classical participant who generates qubits, we have $b = 2L(1+M\epsilon)$. As for the $M$ Bobs, they select altogether $4LM\epsilon$ qubits to measure and forward the rest to TP. At the end, only the \textit{$Z$-SIFT} qubits are used as the final secret key, thus $c = L$. The qubit efficiency becomes:

\begin{align}
    \eta = \frac{1}{6+M\epsilon},
\end{align}

We now compare the proposed protocol with other multiparty mediated QSS protocols. The comparison is drawn from four perspective: quantum resources, quantum capabilities of the participants, communication structure, and qubit efficiency. The results are summarized in Table \ref{table 1}.

\begin{table}[h]
\caption{\label{table 1}Comparison of proposed protocol with other Mediated MQSS schemes.}
\resizebox{\textwidth}{!}{%
\begin{tabular}{lcccc}
\toprule
Protocol & \makecell{Quantum \\ resources} & \makecell{Capabilities of \\ classical participants} & \makecell{Communication \\ structure} & \makecell{Qubit \\ efficiency} \\
\midrule
Tsai et al. \cite{Tsai2022} & GHZ states & \multirow{2}{*}{\makecell[l]{1. Measure $\{\ket{0}, \ket{1}\}$\\2. Perform Hadamard $H$}} & \multirow{2}{*}{One-way} & $\cfrac{1}{2^{M+1}(M+1)}$ \\

Tsai et al. \cite{Tsai2023} & Graph states &  &  & $\cfrac{1}{4(M+1)}$ \\

Our protocol & Single qubits & \makecell[l]{1. Measure $\{\ket{0}, \ket{1}\}$ \\2. Reorder qubits} & Circular & $\cfrac{1}{6(1+M\epsilon)}$ \\
\bottomrule
\end{tabular}
}
\end{table}

In terms of quantum resources, the protocols of Tsai et al. \cite{Tsai2022, Tsai2023} require TP to generate multi-particle GHZ states and complete graph states, which are difficult to produce and maintain. In contrast, our protocol requires TP only to prepare $\ket{+}$ states and to perform single-qubit measurements in the $X$ and $Z$ bases. This significantly reduces TP’s quantum overhead, making the proposed protocol more practical and feasible in terms of implementation.

\begin{figure}[h]
    \centering
    \includegraphics[width=0.7\textwidth]{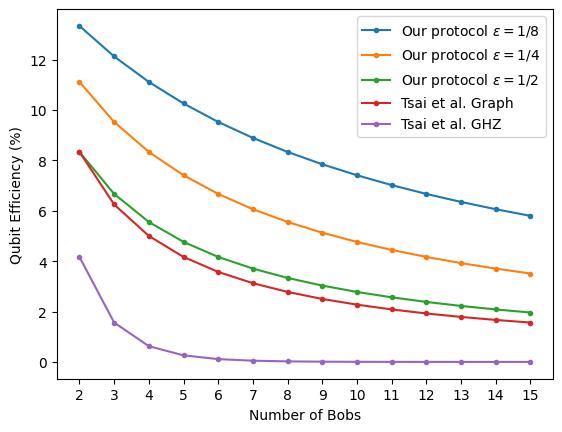}  
    \caption{Qubit efficiency of the different schemes for different numbers of Bobs.}
    \label{efficiency_fig}
\end{figure}

Regarding communication structure, our protocol uses a circular qubit transmission method, whereas Tsai et al.'s protocols adopt a one-way transmission method. This gives Tsai et al.'s protocols certain advantages, such as reduced qubit transmission distance. Additionally, it prevents the classical participants from needing additional devices to defend against quantum Trojan horse attacks. However, in terms of qubit efficiency, our protocol exhibits a significant advantage, especially over the protocol that is based on GHZ states. In Figure \ref{efficiency_fig}, we can more clearly compare the efficiency of our protocol for different values of $\epsilon$ with that of Tsai et al.'s protocols as a function of the number of classical Bobs. Specifically, for the protocol based on GHZ states, the qubit efficiency drops below 1\% (i.e. 0.625\%) when the number of classical Bobs reaches $4$, which makes the protocol extremely inefficient. As for the protocol based on graph states, the efficiency of our protocol remains superior regardless of the number of Bobs when $\epsilon \geq 1/2$. Note that setting $\epsilon$ to this value is an extreme choice. In an ideal case, and especially when $N$ is large, each Bob taking a small subset no larger than half the length of the secret is enough to evaluate the error rate with Alice. Therefore, setting $\epsilon = 1/8$ is a very reasonable choice.

Overall, even though Tsai et al.'s protocols shorten the qubit transmission distance and naturally ward off quantum Trojan horse attacks, our protocol demonstrates higher qubit efficiency and utilizes far cheaper quantum resources that are easier to handle, making it more feasible and efficient in terms of implementation. Furthermore, the circular communication structure gives our protocol an advantage in terms of scalability in the multiparty scenarios.

\section{Conclusion}\label{sec_conc}

This study introduces the first mediated MSQSS protocol based on single qubits. The protocol allows classical Alice to share a secret with $M$ classical Bobs. Compared to similar approaches, the quantum overhead of TP is significantly reduced, and the qubit efficiency is notably improved. As a result, the proposed scheme is more feasible and efficient in terms of implementation. It is also more practical than typical standard multiparty SQSS protocols, as (1) Alice does not need to possess full quantum capabilities, and (2) the classical participants only need to perform two operations (i.e. measuring in the $Z$ basis and reordering qubits). Furthermore, security analysis shows that the protocol can resist common attacks. In future work, it would be interesting to study the behavior of the protocol in the presence of noise.

\bibliographystyle{plainnat}
 \bibliography{mmsqss_reordering}

\end{document}